\pgfplotsset{compat=1.13}
\newcommand\blankfootnote[1]{%
  \let\svthefootnote\thefootnote%
  \let\thefootnote\relax\footnotetext{#1}%
  \let\thefootnote\svthefootnote%
}
\pgfplotsset{
    discard if/.style 2 args={
        x filter/.code={
            \edef\tempa{\thisrow{#1}}
            \edef\tempb{#2}
            \ifx\tempa\tempb
                
            \fi
        }
    },
    discard if not/.style 2 args={
        x filter/.code={
            \edef\tempa{\thisrow{#1}}
            \edef\tempb{#2}
            \ifx\tempa\tempb
            \else
                
            \fi
        }
    }
}
\definecolor{plotcol1}{RGB}{161,218,180}
\definecolor{plotcol2}{RGB}{65,182,196}
\definecolor{plotcol3}{RGB}{44,127,184}
\definecolor{plotcol4}{RGB}{37,52,148}
\newcommand{\coordinator}{{C}}
\newcommand{\agent}[1]{{A_{#1}}}
\newcommand{\agentSet}{\ensuremath{\mathcal{A}}}
\newcommand{\edgeSet}{\ensuremath{\mathcal{E}}}
\newcommand{\numAgents}{{n}}
\newcommand{\agentIndex}{{k}}
\newcommand{\noise}{{N}}
\newcommand{\agentNoise}[1]{\ensuremath{N_{#1}}}
\newcommand{\channelFading}[1]{\ensuremath{h_{#1}}}
\newcommand{\agentTx}[1]{{\alpha_{#1}}}
\newcommand{\agentRx}[1]{\ensuremath{\Gamma_{#1}}}
\newcommand{\coordinatorRx}{{\gamma}}
\newcommand{\coordinatorTx}{\ensuremath{\beta}}
\newcommand{\lexGreater}{{>}}
\newcommand{\lexGreaterCompatible}{{\geq}}
\newcommand{\agentInputSequence}[1]{{S_{#1}}}
\newcommand{\coordinatorOutputEstimate}{{S}}
\newcommand{\naturals}{{\mathbb{N}}}
\newcommand{\complex}{\ensuremath{\mathbb{C}}}
\newcommand{\infiniteBinarySequences}{{\{0,1\}^\infty}}
\newcommand{\finiteBinarySequences}{{\{0,1\}^{<\infty}}}
\newcommand{\generalBinarySequence}{{S}}
\newcommand{\outputCondition}{{\varphi}}
\newcommand{\outputConditionFreeVariable}{{x}}
\newcommand{\maximumRemainingAgents}{{m}}
\newcommand{\maximumRemainingAgentsSet}{{\mathcal{M}}}
\newcommand{\stepIndex}{{t}}
\newcommand{\emptySequence}{{\emptyset}}
\newcommand{\activeAgents}{{\mathcal{A}}}
\newcommand{\protestingAgents}{{\mathcal{P}}}
\newcommand{\raisingAgents}{{\mathcal{R}}}
\newcommand{\append}{{{}^\frown}}
\newcommand{\coordinatorTerminationCount}{{T}}
\newcommand{\cardinality}[1]{{\lvert{#1}\rvert}}
\newcommand{\Probability}{{\mathbb{P}}}
\newcommand{\goodStateEvent}{{\mathbb{G}}}
\newcommand{\goodTermination}{{\mathbb{T}}}
\newcommand{\badTermination}{{\tilde{\mathbb{T}}}}
\newcommand{\probabilitySpace}{{\Omega}}
\newcommand{\descriptionLength}{{d}}
\newcommand{\currentLength}{{\ell}}
\newcommand{\terminationThreshold}{{\tau}}
\newcommand{\agentInputSet}{{\mathcal{S}}}
\newcommand{\networkGraph}{{\mathcal{G}}}
\newcommand{\numCoordinators}{{c}}
\newcommand{\coordinatorIndex}{{\ell}}
\newcommand{\indicator}[1]{{\mathbf{1}_{{#1}}}}
\newcommand{\compatible}{||}
\newcommand{\generalindex}{{k}}
\newcommand{\generalbit}{{b}}
\newcommand{\residualProbability}{\varepsilon}
\newcommand{\quantizationPrecision}{{p}}
\newtheorem{theorem}{Theorem}
\newtheorem{lemma}{Lemma}
\newtheorem{Prob}{Problem}
\newtheorem{Assume}{Assumption}
\newtheorem{remark}{Remark}
\newtheorem{definition}{Definition}
\title{A Scalable Max-Consensus Protocol For Noisy Ultra-Dense Networks}
\author{
\IEEEauthorblockN{
Navneet Agrawal\IEEEauthorrefmark{1}\IEEEauthorrefmark{2}, Matthias Frey\IEEEauthorrefmark{1}\IEEEauthorrefmark{2} and S\l awomir~Sta\'{n}czak\IEEEauthorrefmark{2}\IEEEauthorrefmark{3}
\\
\IEEEauthorrefmark{2}Technische Universität Berlin, \IEEEauthorrefmark{3}Fraunhofer Heinrich Hertz Institute
}
}
\begin{document}
\maketitle

\blankfootnote{
This work was supported by the German Research Foundation (DFG) within their priority program SPP 1914 ``Cyber-Physical Networking'' and by the German Federal Ministry of Education and Research under 
grant 16KIS0605.
\\
\IEEEauthorrefmark{1} The first two authors contributed equally to this work.
}

\begin{abstract}
We introduce \emph{ScalableMax}, a novel communication scheme for achieving max-consensus in a network of multiple agents 
which harnesses the interference in the wireless channel as well as its multicast capabilities.
In a sufficiently dense network, the amount of communication resources required grows logarithmically with the number of nodes, while in state-of-the-art approaches, this growth is at least linear. ScalableMax can handle additive noise and works well in a high SNR regime. For medium and low SNR, we propose the \emph{ScalableMax-EC} scheme, which extends the ideas of ScalableMax by introducing a novel error correction scheme. It achieves lower error rates at the cost of using more channel resources. However, it preserves the logarithmic growth with the number of agents in the system.
\end{abstract}

\section{Introduction and Prior Work}
\label{section:introduction}
The problem of achieving max-consensus in a network of agents arises in many current and envisioned practical applications, particularly in regard to distributed and cooperative control. 
Examples most notably include task assignment \cite{Wongpiromsarn2010}, leader election \cite{Borsche2010}, rendezvous \cite{Sorensen2006}, clock synchronization \cite{Maggs2012}, spectrum sensing \cite{Li2010}, distributed decision-making \cite{Olfati-Saber2004} and formation control \cite{Lafferriere2005}.
Future generations of mobile networks are anticipated to be several orders of magnitude denser than today because of expected infrastructure densification \cite{Ge2016}.
Distributed and cooperative control of multiple agents in various ultra-dense networks will be a major challenge. 
Therefore the growth in complexity of consensus algorithms with the number of agents in the network could become much more important than it is today.
In this work, we present max-consensus protocols that are practical to implement in wireless communication systems and exhibit a more favorable asymptotic complexity behavior than state-of-the-art alternatives.

Historically, max-consensus algorithms are analyzed based on the properties of the communication network graph
\cite{Fax2004,Olfati-Saber2004,Cortes2008}.
The exchange of information between all neighboring agents is assumed to happen simultaneously, with complexity independent of the number of agents.
Hence, these algorithms are designed to minimize the total number of information exchanges required to reach consensus.
However, in wireless networks, these assumptions are often unrealistic due to the presence of interference and noise. 
On the other hand, the specific characteristics of the wireless channel can be exploited by making use of its broadcast and superposition properties. 
Iutzeler et al. proposed and analyzed three communication strategies: 
\emph{Random-Pairwise, Random-Walk} and \emph{Random-Broadcast} \cite{Iutzeler2012, iutzeler2013distributed}. They leverage the broadcast property of the wireless channel,
reduce interference using random scheduling of agents sharing the same channel, 
and protect the transmitted messages using forward error correction. 
This leads to a linear growth of communication resources necessary with the number of agents.
Alternatively, the maximum can be approximated with linear functions
and thus, linear consensus protocols can be applied, e.g., \cite{Nejad2009, Tahbaz-Salehi2006, Cortes2008}.
In \cite{goldenbaum2013robust, Molinari2018, Molinari2018a}, 
the superposition property of the wireless channel is harnessed to achieve 
constant complexity in the number of agents 
in networks with bounded diameter.
But these works neither consider noise introduced by the approximation of the maximum function nor by the wireless channel.
\cite{Huang2010} proposes to use a stochastic approximation based algorithm 
to tackle the residual additive noise, but the convergence rate is much slower than that of standard consensus algorithms.

The main contributions of this work are (a) the introduction of a novel max-consensus protocol that harnesses interference to achieve logarithmic cost while dealing with additive noise, and (b) an error correction mechanism which improves performance in the low and medium SNR regime.

\section{Notation}
We denote the sets of finite and infinite binary sequences with $\finiteBinarySequences$ and $\infiniteBinarySequences$, respectively. Given $\generalBinarySequence_1, \generalBinarySequence_2 \in \finiteBinarySequences \cup \infiniteBinarySequences$, they are \emph{compatible}, or $\generalBinarySequence_1 \compatible \generalBinarySequence_2$, if they coincide on the intersection of their domains. $\generalBinarySequence_1$ is \emph{lexicographically greater} than $\generalBinarySequence_2$, or $\generalBinarySequence_1 \lexGreater \generalBinarySequence_2$, if there is $\generalindex$ such that $\generalBinarySequence_1(\generalindex) > \generalBinarySequence_2(\generalindex)$, while for all $\generalindex' < \generalindex$, $\generalBinarySequence_1(\generalindex') = \generalBinarySequence_2(\generalindex')$. We write $\generalBinarySequence_1 \lexGreaterCompatible \generalBinarySequence_2$ if $\generalBinarySequence_1 \compatible \generalBinarySequence_2$ or $\generalBinarySequence_1(\generalindex) > \generalBinarySequence_2(\generalindex)$. $\emptySequence$ denotes the empty sequence. Given $\generalBinarySequence \in \finiteBinarySequences$ and $\generalbit \in \{0,1\}$, $\generalBinarySequence\append\generalbit$ is the sequence that results from appending $\generalbit$ at the end of $\generalBinarySequence$. Finally, $\indicator{}$ denotes the indicator function and $\lvert \cdot \rvert$ the cardinality of a set.

\section{System Model and Problem Statement}
\label{section:model}
\subsection{Preliminaries}
\label{section:graph_rep}
A Wireless Multiple Access Channel (WMAC) is a system with inputs $\agentTx{1}, \dots, \agentTx{\numAgents}$ and output
$\coordinatorRx := \sum\nolimits_{k=1}^\numAgents \channelFading{k} \agentTx{k} + \noise,$
where $\agentTx{k} \in \complex$ is the signal transmitted by transmitter $k$,
the complex random variable $\channelFading{k}$ is the channel fading coefficient of transmitter $k$,
and the complex random variable $\noise$ is the additive noise at the receiver.
A multicast channel takes an input $\coordinatorTx \in \complex$ from a single transmitting node, and produces outputs $\agentRx{1}, \dots, \agentRx{\numAgents}$ defined as
$\agentRx{k} = \channelFading{k}\coordinatorTx + \agentNoise{k}$ for $\agentIndex=1,\dots,\numAgents$,
where  the complex random variables $\agentNoise{k}$ and $\channelFading{k}$ represent the additive noise and the channel fading coefficient at receiver $k$, respectively. 
One \emph{channel use} is defined as a realization of either a WMAC, a multicast or a point-to-point channel.

\begin{Assume}[Channel Assumptions]
\label{as:channel}
	We assume that the inputs and outputs of the WMACs are real.
	Moreover, the fading coefficients $\channelFading{1}, \dots, \channelFading{\numAgents}$ are assumed to be deterministically equal to $1$.
	The only assumption on the additive noise distribution is that it is symmetric around $0$. White Gaussian noise is one example of such a noise distribution.
\end{Assume}

One way of accommodating complex fading would be to add suitable pre- and post-processing which cancels the fading coefficients up to a residual noise term. For an example of how this can be done even in case the fading coefficients are known neither at the transmitter nor at the receiver, we refer the reader to \cite{Goldenbaum2013a} and \cite{goldenbaum2013robust}.
Multicast communication satisfying arbitrarily low errors can be realized employing state-of-the-art coding schemes with forward error correction. 
In the following, we assume that multicast transmission of binary sequences is possible without error.
Consider a wireless network defined by an undirected connected graph $\networkGraph = (\agentSet, \edgeSet)$. 
The nodes in $\agentSet$ can communicate with each other through channels represented by the edges in $\edgeSet$. 
Besides point-to-point communication along individual edges, we also harness the multicast and superposition (WMAC) properties of the wireless channel.
For simplicity, we start with considering a star-shaped network topology, i.e., there is a central node $\coordinator \in \agentSet$, the \emph{coordinator}, with links to all other nodes in $\agentSet$. 
We assume that the communication between $\coordinator$ and the other nodes is perfectly synchronized.
In Section~\ref{section:noncomplete}, we extend the proposed solutions to general undirected connected wireless network graphs assuming some prior coordination.

\subsection{Problem Statement}
In this section, we define the general max-consensus problem and simplify it to a relaxed version which can be solved more efficiently.

\begin{Prob}[Max-consensus]
\label{Prob:max-consensus}
Each agent $\agent{k} \in \agentSet$ holds an input $\agentInputSequence{k} \in \agentInputSet$,
where $\agentInputSet$ is a finite totally ordered set.
We say that the system has achieved \emph{max-consensus} 
if all agents agree on a common output $\coordinatorOutputEstimate$
that is equal to the maximum of the inputs from all agents,
i.e., $\coordinatorOutputEstimate = \max_{\agent{k} \in \agentSet} \agentInputSequence{k}$.
The objective is to design protocols that can achieve max-consensus with a minimum number of channel uses.
\end{Prob}

We can assume without loss of generality that $\agentInputSet$ is a set of binary sequences of a certain fixed length, equipped with lexicographic ordering which coincides with the usual ordering on dyadic rationals.
For example, consider a Wireless Sensor Network where sensor nodes are sensing a physical phenomenon described by a real number.
The sensors, due to their limited sensitivity, can only read the value up to a quantized number, represented by a finite sequence of binary digits.
In the following, we assume that each agent holds an infinite-length binary sequence, and that no two agents hold the same sequence.
In practice, this can be achieved by concatenating as many uniformly random bits as required by the scheme at the end of any agent's finite input sequence.
In the relaxed version of the max-consensus problem, we seek to narrow down the set of all agents to a smaller set which still contains the agent holding the maximum input. 

\begin{definition}[Weak $\maximumRemainingAgents$-max-consensus]	
	Each agent in $\agentSet$ holds an input sequence $\agentInputSequence{k} \in \infiniteBinarySequences$, where no two inputs are the same.
	At any point in time, the coordinator can terminate the scheme with a termination condition $\outputCondition = \outputCondition(\outputConditionFreeVariable)$ either of the form $\outputCondition(\outputConditionFreeVariable) = \outputConditionFreeVariable \lexGreaterCompatible \coordinatorOutputEstimate$ or of the form $\outputCondition(\outputConditionFreeVariable) = \outputConditionFreeVariable \lexGreater \coordinatorOutputEstimate$, where $\outputConditionFreeVariable$ is a free variable and $\coordinatorOutputEstimate \in \finiteBinarySequences$ is called the coordinator's output estimate. 
	We say that the termination is successful iff $ 1 \leq \lvert \maximumRemainingAgentsSet \rvert \leq \maximumRemainingAgents$, where $ \maximumRemainingAgentsSet := \{ \agent{k}: \outputCondition(\agentInputSequence{k}) \} $ is the set of agents satisfying the termination condition.
\end{definition}

\begin{remark}[From weak $\maximumRemainingAgents$-max-consensus to max-consensus]\label{remark:weak-to-full}
	Further steps are required after reaching a weak $\maximumRemainingAgents$-max-consensus to find the true maximum among the remaining set $\maximumRemainingAgentsSet$ of agents.
	As long as $\maximumRemainingAgents$ does not grow with the number of agents in the system, this reduction 
	can be achieved with a constant number of channel uses through a series of point-to-point and/or multicast communications, e.g., employing Random-Pairwise or Random-Broadcast~\cite{Iutzeler2012, iutzeler2013distributed}.
\end{remark}

\begin{remark}[Designing $\maximumRemainingAgents$]
	The agent holding the true maximum input sequence is guaranteed to be an element of $\maximumRemainingAgentsSet$ as long as $\maximumRemainingAgentsSet \neq \emptyset$.
	$\maximumRemainingAgents$ is a designable parameter which does not need to grow with the number of agents in the system.
	The higher it is, the more we can harness the combined signal strength of multiple transmitters to combat noise, 
	but the more communication resources are necessary to simplify the max-consensus problem to weak $\maximumRemainingAgents$-max-consensus.
\end{remark}

\section{ScalableMax Scheme}
\label{section:scalable-max}
In this section, we propose a scheme that achieves weak $\maximumRemainingAgents$-max-consensus and scales logarithmically with the number of agents.
The max-consensus problem can be simplified to weak $\maximumRemainingAgents$-max-consensus as pointed out in Remark~\ref{remark:weak-to-full}.

The coordinator starts the scheme and generates an output estimate $\coordinatorOutputEstimate \in \finiteBinarySequences$ based on information received from the agents.
In the following, we detail the communication protocols and information shared between agents and the coordinator.
For every possible coordinator output estimate $\coordinatorOutputEstimate$, we define the set $\protestingAgents_\coordinatorOutputEstimate := \{\agentIndex: \agentInputSequence{\agentIndex} \lexGreater \coordinatorOutputEstimate\}$ of \emph{protesting agents}, the set $\activeAgents_\coordinatorOutputEstimate :=  \{\agentIndex: \agentInputSequence{\agentIndex} \lexGreaterCompatible \coordinatorOutputEstimate\}$ of \emph{active agents} and the set $\raisingAgents_\coordinatorOutputEstimate :=  \{\agentIndex: \agentInputSequence{\agentIndex} \lexGreaterCompatible \coordinatorOutputEstimate\append1\}$ of \emph{raising agents}. The coordinator uses noisy estimates of the cardinalities of these sets in order to refine its output estimate.

We use an iteration counter $\stepIndex$, where each iteration consists of a transmission of digital information through the multicast channel and three uses of the WMAC, and thereby corresponds to a constant number of channel uses. Conceptually, we thus split every iteration $\stepIndex$ into four time instants $4\stepIndex, \dots, 4\stepIndex+3$. The coordinator starts with $\coordinatorOutputEstimate(0) := \emptySequence$. At every time instant of the form $4\stepIndex$, it transmits $\coordinatorOutputEstimate(\stepIndex)$ through the multicast channel. We remark that since $\coordinatorOutputEstimate(\stepIndex)$ differs from $\coordinatorOutputEstimate(\stepIndex-1)$ in at most one digit, it is sufficient to transmit only the change, and hence, the length of the transmitted sequence can be considered constant. At time instants not divisible by 4, the agents transmit through the WMAC, the signal of each being either $1$ or $0$, according to the following scheme:

\begin{tabular}{ccc}
time instant            & step name      & signal transmitted by $\agent{\agentIndex}$\\
$4\stepIndex + 1$       & protest        & $\indicator{\agentIndex \in \protestingAgents_{\coordinatorOutputEstimate(\stepIndex)}}$\\
$4\stepIndex + 2$       & activity       & $\indicator{\agentIndex \in \activeAgents_{\coordinatorOutputEstimate(\stepIndex)}}$\\
$4\stepIndex + 3$       & raising        & $\indicator{\agentIndex \in \raisingAgents_{\coordinatorOutputEstimate(\stepIndex)}}$
\end{tabular}

We denote the signal transmitted by agent $\agent{\agentIndex}$ at time instant $\stepIndex$ with $\agentTx{\agentIndex}(\stepIndex)$ and the corresponding received signal $\coordinatorRx(\stepIndex) = \sum_{\agentIndex=1}^{\numAgents} \agentTx{\agentIndex}(\stepIndex) + \noise(\stepIndex)$. These values are not defined if $\stepIndex$ is divisible by $4$, since the agents do not transmit in these steps.

After step $4\stepIndex+3$, the coordinator either determines a new output estimate $\coordinatorOutputEstimate(\stepIndex+1)$ or it makes a termination decision according to Fig.~\ref{algorithm:coordinator-postprocessing}. \showto{arxiv}{In Fig.~\ref{fig:coordinator-postprocessing}, we show a graphical representation of part of the decision process.} \showto{conference}{The extended version of this work~\cite{arxivVersion} includes a further graphical representation to clarify the decision process.}

\begin{algorithm}
\If{$\coordinatorRx(4\stepIndex+1) > \maximumRemainingAgents/4$}{
  Terminate with $\outputCondition(\outputConditionFreeVariable) = \outputConditionFreeVariable \lexGreater \coordinatorOutputEstimate(\stepIndex)$\;
}
\If{$\coordinatorRx(4\stepIndex+2) < 3\maximumRemainingAgents/4$}{
  Terminate with $\outputCondition(\outputConditionFreeVariable) = \outputConditionFreeVariable \lexGreaterCompatible \coordinatorOutputEstimate(\stepIndex)$\;
}
\eIf{$\coordinatorRx(4\stepIndex+3) < \maximumRemainingAgents/4$}{
  $\coordinatorOutputEstimate(\stepIndex+1) \gets \coordinatorOutputEstimate(\stepIndex)\append0$\;
  }{
  $\coordinatorOutputEstimate(\stepIndex+1) \gets \coordinatorOutputEstimate(\stepIndex)\append1$\;
  \If{$\coordinatorRx(4\stepIndex+3) < 3\maximumRemainingAgents/4$}{
    Terminate with $\outputCondition(\outputConditionFreeVariable) = \outputConditionFreeVariable \lexGreaterCompatible \coordinatorOutputEstimate(\stepIndex+1)$\;
  }
}
\caption{Post-processing of received signals in ScalableMax.}
\label{algorithm:coordinator-postprocessing}
\end{algorithm}

\begin{shownto}{arxiv}
\begin{figure}
\begin{tikzpicture}
\coordinate  (origin)                    at (0,0);
\coordinate  (noprotest_mactivity)       at (0,4);
\coordinate  (mprotest)                  at (4,4);
\coordinate  (noprotest_upperedge)       at (0,5);
\coordinate  (mprotest_upperedge)        at (5,5);
\coordinate  (atermination_leftedge)     at (-1,3);
\coordinate  (atermination_ptermination) at (1,3);
\coordinate  (ptermination_loweredge)    at (1,-1);
\coordinate  (ptermination_upperedge)    at (1,5);
\coordinate  (midprotest_midactivity)    at (2,2);
\coordinate  (ptermination_correction_upperedge) at (3,5);
\coordinate  (ptermination_correction_loweredge) at (3,-1);
\coordinate  (atermination_correction) at (1,1);
\coordinate  (atermination_correction_leftedge) at (-1,1);

\fill[pattern=north east lines, pattern color=lightgray] (midprotest_midactivity) rectangle (noprotest_upperedge);

\node[rotate=90] at (-1.5,2) {$\coordinatorRx(4\stepIndex+2)$ (activity step)};
\node[below] at (2,-1) {$\coordinatorRx(4\stepIndex+1)$ (protest step)};
\node[anchor=north] at (origin) {$(0,0)$};
\draw[thick] (mprotest) -- +(-4pt,0) -- +(4pt,0) node[right] (mprotest_label) {$(\maximumRemainingAgents,\maximumRemainingAgents)$};
\draw[thick] (noprotest_mactivity) -- +(4pt,0) -- +(-4pt,0) node[left] {$(0,\maximumRemainingAgents)$};
\node[above left=5pt,circle,draw] at (atermination_ptermination) {1};
\node[below left=5pt,circle,draw] at (atermination_ptermination) {2};
\node[right=5pt,circle,draw] at (atermination_ptermination) {3};
\node[below left=5pt,circle,draw] at (atermination_correction) {4};
\node[left=5pt,circle,draw] at (mprotest) {4};

\draw[dashed] (origin) -- (noprotest_upperedge);
\draw[dashed] (origin) -- (mprotest_upperedge);

\draw (atermination_leftedge) -- (atermination_ptermination);
\draw (ptermination_loweredge) -- (ptermination_upperedge);
\draw (atermination_correction_leftedge) -- (atermination_correction);
\draw (ptermination_correction_upperedge) -- (ptermination_correction_loweredge);
\end{tikzpicture}
\caption{Visualization of post-processing at the coordinator of received signals $\coordinatorRx(4\stepIndex+1)$ and $\coordinatorRx(4\stepIndex+2)$. Dashed lines show the possible noiseless combined signals and solid lines delimit the numbered decision regions. Only in decision region 1 is $\coordinatorRx(4\stepIndex+3)$ taken into account, and the decision can be to append $0$ or $1$ to the output estimate, or to append $1$ and terminate with a $\lexGreaterCompatible$ condition. Received signals in region 2 and 3 lead to termination with conditions $\lexGreaterCompatible$ and $\lexGreater$, respectively, with the current unmodified output estimate. In ScalableMax, the regions marked 4 are part of regions 2 and 3, respectively. In ScalableMax-EC, they correspond to the coordinator removing the last digit from its current output estimate, thus correcting an error that may have been made in previous steps due to high noise.}
\label{fig:coordinator-postprocessing}
\end{figure}
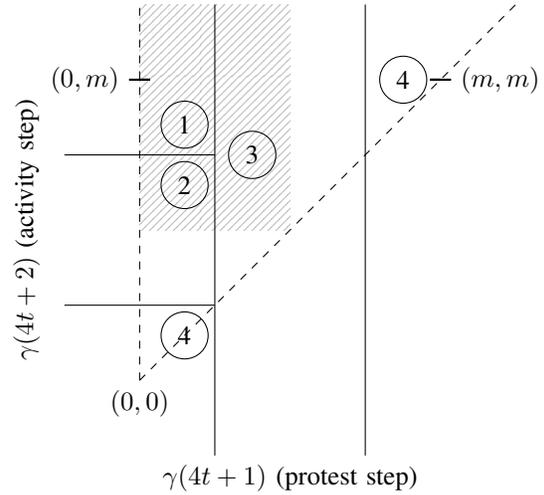
\end{shownto}

\begin{theorem}
\label{theorem:ScalableMax}
Suppose that $\maximumRemainingAgents$ is even. Then the probability that the ScalableMax scheme terminates successfully within $\descriptionLength+1$ iterations is at least $\Probability(\noise \leq \maximumRemainingAgents/4)^{3(\descriptionLength+1)}$, where $\noise$ is the additive noise of the WMAC and
\[\descriptionLength := \min\{ \ell \in \naturals: \forall \coordinatorOutputEstimate \in \{0,1\}^\ell~ \cardinality{\{\agentIndex: \agentInputSequence{\agentIndex} \compatible \coordinatorOutputEstimate\}} \leq 1\}.\]
\end{theorem}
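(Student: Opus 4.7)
Let $p := \Probability(\noise \leq \maximumRemainingAgents/4)$. By the symmetry of the noise distribution around $0$, $\Probability(\noise \geq -\maximumRemainingAgents/4) = p$ as well, so any one-sided constraint on a single WMAC noise sample of the form ``$\noise \leq \maximumRemainingAgents/4$'' or ``$\noise \geq -\maximumRemainingAgents/4$'' has probability $p$. The plan is to construct an event $G$ that (i) implies successful termination within $\descriptionLength+1$ iterations and (ii) is an intersection of $3(\descriptionLength+1)$ one-sided noise events (one per WMAC use), each with conditional probability $p$ given the algorithm's history. Independence across channel uses and the tower property then yield $\Probability(G) \geq p^{3(\descriptionLength+1)}$, and the side of the constraint at each WMAC use is allowed to depend on previous samples through the coordinator's state.

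At the heart of the argument is a loop invariant at the start of each non-terminating iteration $\stepIndex$, together with a matching choice of one-sided noise constraints that either preserve the invariant or certify a successful termination. The natural candidate, which uses the hypothesis that $\maximumRemainingAgents$ is even, is $|P_{\coordinatorOutputEstimate(\stepIndex)}| \leq \maximumRemainingAgents/2$ together with $|A_{\coordinatorOutputEstimate(\stepIndex)}| \geq \maximumRemainingAgents/2$, with small-$\numAgents$ regimes handled directly because the first activity step already terminates successfully. Combined with the constraint $\noise(4\stepIndex+1) \leq \maximumRemainingAgents/4$, a protest-step termination can only fire when $|P_{\coordinatorOutputEstimate(\stepIndex)}| \geq 1$, so by the invariant $1 \leq |\maximumRemainingAgentsSet| \leq \maximumRemainingAgents$; and with $\noise(4\stepIndex+2) \geq -\maximumRemainingAgents/4$, an activity-step termination forces $|A_{\coordinatorOutputEstimate(\stepIndex)}| \leq \maximumRemainingAgents - 1$, again a success.

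For the raising step, the side of the constraint on $\noise(4\stepIndex+3)$ is chosen adaptively. On the append-$0$ branch I would impose $\noise(4\stepIndex+3) \geq -\maximumRemainingAgents/4$; together with $\coordinatorRx(4\stepIndex+3) < \maximumRemainingAgents/4$ this forces $|R_{\coordinatorOutputEstimate(\stepIndex)}| < \maximumRemainingAgents/2$, so the transitions $|P_{\coordinatorOutputEstimate(\stepIndex+1)}| = |R_{\coordinatorOutputEstimate(\stepIndex)}|$ and $|A_{\coordinatorOutputEstimate(\stepIndex+1)}| = |A_{\coordinatorOutputEstimate(\stepIndex)}|$ preserve the invariant. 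On the append-$1$-without-termination branch I would impose $\noise(4\stepIndex+3) \leq \maximumRemainingAgents/4$; together with $\coordinatorRx(4\stepIndex+3) \geq 3\maximumRemainingAgents/4$ this forces $|R_{\coordinatorOutputEstimate(\stepIndex)}| \geq \maximumRemainingAgents/2$, and the transitions $|A_{\coordinatorOutputEstimate(\stepIndex+1)}| = |R_{\coordinatorOutputEstimate(\stepIndex)}|$ and $|P_{\coordinatorOutputEstimate(\stepIndex+1)}| = |P_{\coordinatorOutputEstimate(\stepIndex)}|$ re-establish the invariant. In the middle, terminating branch the appropriate one-sided constraint combined with the invariant yields $1 \leq |R_{\coordinatorOutputEstimate(\stepIndex)}| \leq \maximumRemainingAgents$, a successful termination. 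Because each non-terminating iteration appends exactly one bit to $\coordinatorOutputEstimate$, after at most $\descriptionLength$ such iterations $|\coordinatorOutputEstimate(\stepIndex)|$ reaches $\descriptionLength$, and by the definition of $\descriptionLength$ at most one agent's input is compatible with $\coordinatorOutputEstimate(\descriptionLength)$; together with the invariant this forces the $(\descriptionLength+1)$-th iteration to terminate successfully.

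\emph{Main obstacle.} The delicate step is the branch-dependent bookkeeping at the raising step, where the correct side of $\noise(4\stepIndex+3)$ depends on which of the three regions of Fig.~\ref{fig:coordinator-postprocessing} the signal lies in. Verifying that each of the raising-step branches can be paired with a one-sided constraint that (a) preserves the invariant or certifies a successful termination and (b) still contributes exactly a factor of $p$ to the probability of $G$ via the tower property is where the bulk of the case analysis lies, especially in reconciling the thresholds $\maximumRemainingAgents/4$ and $3\maximumRemainingAgents/4$ with the invariant bounds on $|P_{\coordinatorOutputEstimate(\stepIndex)}|$, $|A_{\coordinatorOutputEstimate(\stepIndex)}|$, and $|R_{\coordinatorOutputEstimate(\stepIndex)}|$.
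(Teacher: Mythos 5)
Your overall plan coincides with the paper's: a good-state invariant on $\cardinality{\protestingAgents_{\coordinatorOutputEstimate(\stepIndex)}}$ and $\cardinality{\activeAgents_{\coordinatorOutputEstimate(\stepIndex)}}$, one factor of $p=\Probability(\noise\leq\maximumRemainingAgents/4)$ per WMAC use via symmetry, chaining over $\descriptionLength+1$ iterations, and your protest- and activity-step arguments are essentially those in the paper's lemma. The gap is precisely the point you flag, and it is not just bookkeeping: you cannot choose the side of the constraint on $\noise(4\stepIndex+3)$ according to the realized branch, because the branch is itself a function of $\noise(4\stepIndex+3)$. Take the typical state $\raisingAgents_{\coordinatorOutputEstimate(\stepIndex)}=\emptyset$ (which occurs whenever the next digit of the running maximum is $0$). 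The middle branch then terminates with $\maximumRemainingAgentsSet=\activeAgents_{\coordinatorOutputEstimate(\stepIndex)\append1}=\emptyset$, an unsuccessful termination, and the append-$1$ branch leads to a state with no active agents; so your event $G$ must be contained in the append-$0$ branch $\{\noise(4\stepIndex+3)<\maximumRemainingAgents/4\}$, and your additional requirement $\noise(4\stepIndex+3)\geq-\maximumRemainingAgents/4$ on that branch leaves a two-sided window of conditional probability roughly $2p-1<p$, so the product bound $p^{3(\descriptionLength+1)}$ does not follow from your construction. Relatedly, on the middle branch the claim that ``the appropriate one-sided constraint combined with the invariant yields $1\leq\cardinality{\raisingAgents_{\coordinatorOutputEstimate(\stepIndex)}}\leq\maximumRemainingAgents$'' cannot be repaired: whether this holds is a property of the state, fixed before $\noise(4\stepIndex+3)$ is drawn; your invariant gives no upper bound on $\cardinality{\raisingAgents_{\coordinatorOutputEstimate(\stepIndex)}}$ (it only lower-bounds $\cardinality{\activeAgents_{\coordinatorOutputEstimate(\stepIndex)}}$); and no single one-sided noise constraint yields both $\cardinality{\raisingAgents}\geq1$ (which would need $\noise<\maximumRemainingAgents/4$) and $\cardinality{\raisingAgents}\leq\maximumRemainingAgents$ (which would need $\noise\geq-\maximumRemainingAgents/4$) from $\coordinatorRx(4\stepIndex+3)\in[\maximumRemainingAgents/4,3\maximumRemainingAgents/4)$.

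The repair is to let the side of the constraint depend on the state, i.e., on $\cardinality{\raisingAgents_{\coordinatorOutputEstimate(\stepIndex)}}$, which is measurable with respect to the history before step $4\stepIndex+3$. A case distinction shows that for each fixed value of $\cardinality{\raisingAgents_{\coordinatorOutputEstimate(\stepIndex)}}$ the set of noise values producing either a bad next state or an unsuccessful termination is one-sided: for $\cardinality{\raisingAgents}<\maximumRemainingAgents/2$ (including $\raisingAgents=\emptyset$) a bad outcome forces $\noise(4\stepIndex+3)>\maximumRemainingAgents/4$, while for $\cardinality{\raisingAgents}\geq\maximumRemainingAgents/2$ (including $\cardinality{\raisingAgents}>\maximumRemainingAgents$) it forces $\noise(4\stepIndex+3)<-\maximumRemainingAgents/4$; in the intermediate successful-termination cases no constraint is needed at all. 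By the symmetry of the noise each such one-sided bad event has probability at most $1-p$, so each iteration still contributes at least $p^3$ conditioned on the previous state. This state-dependent case analysis on $\cardinality{\raisingAgents_{\coordinatorOutputEstimate(\stepIndex)}}$, phrased as bounding the per-iteration failure by a union of three one-sided noise events rather than building the success event branch by branch, is exactly how the paper's lemma closes the argument; with that change your invariant and chaining go through as in the paper.
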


\begin{remark}
The maximum description length $\descriptionLength$ is an important quantity for the performance of the ScalableMax scheme, but it is highly dependent on the agents' inputs and therefore in general unknown. However, we can bound it under assumptions on the distribution of the agents' inputs. So for instance, if we assume the inputs are uniformly distributed, we get
\begin{align*}
\Probability(\descriptionLength \geq \descriptionLength_0)
&=
\Probability\left(
  \bigvee\nolimits_{\agentIndex \neq \agentIndex' = 1}^{\numAgents}
    \bigwedge\nolimits_{\currentLength=1}^{\descriptionLength_0} \agentInputSequence{\agentIndex}(\currentLength) = \agentInputSequence{\agentIndex'}(\currentLength)
\right)
\\
&\leq
\numAgents(\numAgents-1)/2 \cdot (1/2)^{\descriptionLength_0},
\end{align*}
where the inequality is due to the union bound. Elementary transformations yield that $\Probability(\descriptionLength \geq \descriptionLength_0) \leq \residualProbability$ as long as
\[ \descriptionLength_0 \geq \log_2 \numAgents + \log_2 (\numAgents - 1) + \log_2(1/\residualProbability) - 1, \]
so in the case that the agents' inputs are uniformly distributed, the description length depends logarithmically on $\numAgents$.

In general, according to our assumptions in Section~\ref{section:model}, each of the agents' inputs consists of finitely many, say $\quantizationPrecision$, arbitrary bits and infinitely many uniform bits. Then $\Probability(\descriptionLength \geq \descriptionLength_0) \leq \residualProbability$ if
\[ \descriptionLength_0 \geq \quantizationPrecision + \log_2 \numAgents + \log_2 (\numAgents - 1) + \log_2(1/\residualProbability) - 1.\]
\end{remark}

\begin{shownto}{arxiv}
In order to prove Theorem~\ref{theorem:ScalableMax}, we introduce some additional notation and terminology. We say that the system is in a \emph{good state} if the output estimate $\coordinatorOutputEstimate$ satisfies $\cardinality{\protestingAgents_\coordinatorOutputEstimate} < \maximumRemainingAgents/2$ and $\cardinality{\activeAgents_\coordinatorOutputEstimate} \geq \maximumRemainingAgents/2$. The region of potential signals to be received in the protest and activity steps provided the system is in a good state is shaded in Fig.~\ref{fig:coordinator-postprocessing}. Note that if $\coordinatorOutputEstimate$ is the empty sequence, the state is good if there are at least $\maximumRemainingAgents/2$ agents in the system. With $\goodStateEvent_\stepIndex$ we denote the event that the system is in a good state at step $4\stepIndex$. With $\goodTermination_\stepIndex$ we denote the event that the coordinator terminates the scheme successfully at step $4\stepIndex$, with $\badTermination_\stepIndex$ we denote the event that the coordinator terminates the scheme unsuccessfully at step $4\stepIndex$ and with $\badTermination_\stepIndex^\outputCondition$ we denote the event that the coordinator terminates the scheme unsuccessfully at step $4\stepIndex$ with output condition $\outputCondition$.

\begin{lemma}
\label{lemma:nocorrection-transitionprobability}
\[
\Probability(\goodStateEvent_{\stepIndex+1} \cup \goodTermination_{\stepIndex+1} | \goodStateEvent_{\stepIndex}) \geq \Probability(\noise \leq \maximumRemainingAgents/4)^3.
\]
\end{lemma}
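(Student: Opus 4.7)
The plan is to identify, conditional on the filtration up to step $4\stepIndex$, a sufficient event on the three noise realizations $\noise(4\stepIndex+1), \noise(4\stepIndex+2), \noise(4\stepIndex+3)$ that implies $\goodStateEvent_{\stepIndex+1} \cup \goodTermination_{\stepIndex+1}$, and then to bound its probability from below using independence across channel uses together with the symmetry of the noise distribution.

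First I would abbreviate $p := \cardinality{\protestingAgents_{\coordinatorOutputEstimate(\stepIndex)}}$, $a := \cardinality{\activeAgents_{\coordinatorOutputEstimate(\stepIndex)}}$, $r := \cardinality{\raisingAgents_{\coordinatorOutputEstimate(\stepIndex)}}$, so that $p < \maximumRemainingAgents/2 \leq a$ by $\goodStateEvent_\stepIndex$ and $p \leq r \leq a$ by the set inclusions. Decomposing the agents compatible with $\coordinatorOutputEstimate(\stepIndex)$ by the value of their next bit, I would compute the updated cardinalities of the protesting and active sets after one iteration: appending $0$ yields $(r, a)$, so the successor is a good state iff $r < \maximumRemainingAgents/2$; appending $1$ yields $(p, r)$, so the successor is a good state iff $r \geq \maximumRemainingAgents/2$.

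I would then split into Case 1 ($r < \maximumRemainingAgents/2$) and Case 2 ($r \geq \maximumRemainingAgents/2$). For Case 1 the sufficient event is $\noise(4\stepIndex+1) \leq \maximumRemainingAgents/4 \,\wedge\, \noise(4\stepIndex+2) \geq -\maximumRemainingAgents/4 \,\wedge\, \noise(4\stepIndex+3) \leq \maximumRemainingAgents/4$; for Case 2 the last inequality is replaced by $\noise(4\stepIndex+3) \geq -\maximumRemainingAgents/4$. In each case I would walk through the branches of the post-processing algorithm and verify that the event forces a good outcome. The bound on $\noise(4\stepIndex+1)$ guarantees that a step-1 termination implies $p \geq 1$, so $\cardinality{\maximumRemainingAgentsSet} = p \in [1, \maximumRemainingAgents/2)$ and termination is successful; the bound on $\noise(4\stepIndex+2)$ guarantees that a step-2 termination implies $a < \maximumRemainingAgents$, so $\cardinality{\maximumRemainingAgentsSet} = a \in [\maximumRemainingAgents/2, \maximumRemainingAgents)$ and termination is successful. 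In Case 1, $\noise(4\stepIndex+3) \leq \maximumRemainingAgents/4$ together with $r < \maximumRemainingAgents/2$ gives $r + \noise(4\stepIndex+3) < 3\maximumRemainingAgents/4$, ruling out the wrong-direction C1 branch, so the step-3 outcome is either the good-state C0 transition or a successful T3 termination with $\cardinality{\maximumRemainingAgentsSet} = r$. In Case 2, $\noise(4\stepIndex+3) \geq -\maximumRemainingAgents/4$ together with $r \geq \maximumRemainingAgents/2$ rules out the wrong-direction C0 branch, and together with the T3 condition $r + \noise(4\stepIndex+3) < 3\maximumRemainingAgents/4$ also forces $r < \maximumRemainingAgents$, making T3 successful; the C1 branch simply yields the desired good state.

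By symmetry of the noise around $0$, $\Probability(\noise \geq -\maximumRemainingAgents/4) = \Probability(\noise \leq \maximumRemainingAgents/4)$, and since the noise samples at distinct channel uses are independent, each of the two sufficient events has probability exactly $\Probability(\noise \leq \maximumRemainingAgents/4)^3$, yielding the claimed bound. I expect the main obstacle to be executing the step-3 case analysis cleanly, especially at the two boundary configurations: $r = 0$ in Case 1, where the only failure mode is the measure-zero event $\noise(4\stepIndex+3) = \maximumRemainingAgents/4$ landing on the algorithm's threshold, and $r > \maximumRemainingAgents$ in Case 2, where the noise bound must be shown to force the C1 rather than the T3 branch so that $\cardinality{\maximumRemainingAgentsSet}$ never exceeds $\maximumRemainingAgents$.
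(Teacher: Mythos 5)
Your proof is correct and follows essentially the same route as the paper's: the paper argues via the contrapositive, showing that (given a good state) any unsuccessful termination or transition to a bad state forces one of the three noise samples in that iteration past the $\pm\maximumRemainingAgents/4$ threshold, and then applies independence across channel uses and symmetry of the noise exactly as you do, so your direct construction of a case-dependent sufficient event (split on $\cardinality{\raisingAgents_{\coordinatorOutputEstimate(\stepIndex)}}$ relative to $\maximumRemainingAgents/2$, with the same branch-by-branch check of the post-processing) is just the mirror image of that case analysis. The boundary configuration you flag ($\raisingAgents_{\coordinatorOutputEstimate(\stepIndex)}=\emptyset$ with noise exactly equal to $\maximumRemainingAgents/4$) is silently glossed over in the paper's own proof as well and is harmless for noise distributions without an atom at $\maximumRemainingAgents/4$, such as the Gaussian example.
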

\begin{proof}
We abbreviate the three possible termination conditions at step $4\stepIndex$ with $\outputCondition_1(\outputConditionFreeVariable) = \outputConditionFreeVariable \lexGreater \coordinatorOutputEstimate(\stepIndex)$, $\outputCondition_2(\outputConditionFreeVariable) = \outputConditionFreeVariable \lexGreaterCompatible \coordinatorOutputEstimate(\stepIndex)$ and $\outputCondition_3(\outputConditionFreeVariable) = \outputConditionFreeVariable \lexGreaterCompatible \coordinatorOutputEstimate(\stepIndex+1)$ and bound the probability as follows:
\begin{align}
\nonumber
&\begin{aligned}\hphantom{={}}
\Probability(\goodStateEvent_{\stepIndex+1} \cup \goodTermination_{\stepIndex+1} | \goodStateEvent_{\stepIndex})
\end{aligned}
\\
\nonumber
&\begin{aligned}\geq
\Probability((\goodStateEvent_{\stepIndex+1} \cup \goodTermination_{\stepIndex+1}) \cap \goodStateEvent_{\stepIndex})
\end{aligned}
\\
\label{eq:nocorrection-transitionprobability-splitting}
&\begin{aligned}
=
1-\Probability(
  &(\badTermination_{\stepIndex+1}^{\outputCondition_1} \cap \goodStateEvent_{\stepIndex})
  \cup
  (\badTermination_{\stepIndex+1}^{\outputCondition_2} \cap \goodStateEvent_{\stepIndex})
  \cup
  (\badTermination_{\stepIndex+1}^{\outputCondition_3} \cap \goodStateEvent_{\stepIndex})
  \\
  &\cup
  (\goodStateEvent_{\stepIndex} \setminus (\goodStateEvent_{\stepIndex+1} \cup \goodTermination_{\stepIndex+1} \cup \badTermination_{\stepIndex+1}))
)
\end{aligned}
\\
\label{eq:nocorrection-transitionprobability-implications}
&\begin{aligned}\geq
1-\Probability(
  &\noise(4\stepIndex+2) > \maximumRemainingAgents/4
  \cup
  \noise(4\stepIndex+3) < -\maximumRemainingAgents/4
\\
  &\cup
  \noise(4\stepIndex+4) > \maximumRemainingAgents/4
)
\end{aligned}
\\
\nonumber
&\begin{aligned}=
\Probability(
  &\noise(4\stepIndex+2) \leq \maximumRemainingAgents/4
  \cap
  \noise(4\stepIndex+3) \geq -\maximumRemainingAgents/4
\\
  &\cap
  \noise(4\stepIndex+4) \leq \maximumRemainingAgents/4
)
\end{aligned}
\\
\label{eq:nocorrection-transitionprobability-independence}
&\begin{aligned}=
\Probability(\noise \leq \maximumRemainingAgents/4)^3
\end{aligned}
\end{align}
Equality (\ref{eq:nocorrection-transitionprobability-splitting}) is due to the observation that if neither $\goodStateEvent_{\stepIndex+1}$ nor $\goodTermination_{\stepIndex+1}$ occurs, at least one of the following events must occur: $\badTermination_{\stepIndex+1}^{\outputCondition_1}$, $\badTermination_{\stepIndex+1}^{\outputCondition_2}$, $\badTermination_{\stepIndex+1}^{\outputCondition_3}$, $\probabilitySpace \setminus (\goodStateEvent_{\stepIndex+1} \cup \goodTermination_{\stepIndex+1} \cup \badTermination_{\stepIndex+1})$.

Inequality (\ref{eq:nocorrection-transitionprobability-implications}) is due to the following implications among events:
\begin{itemize}
 \item $\goodStateEvent_{\stepIndex} \cap \badTermination_{\stepIndex+1}^{\outputCondition_1}$ implies $\protestingAgents_{\coordinatorOutputEstimate(\stepIndex)} = \emptyset$ and therefore $\noise(4\stepIndex+2) > \maximumRemainingAgents/4$.
 \item $\goodStateEvent_{\stepIndex} \cap \badTermination_{\stepIndex+1}^{\outputCondition_2}$ implies $\activeAgents_{\coordinatorOutputEstimate(\stepIndex)} > \maximumRemainingAgents$ and therefore $\noise(4\stepIndex+3) < -\maximumRemainingAgents/4$.
 \item In order to analyze the implications of $(\goodStateEvent_{\stepIndex} \cap \badTermination_{\stepIndex+1}^{\outputCondition_3}) \cup (\goodStateEvent_{\stepIndex} \setminus (\goodStateEvent_{\stepIndex+1} \cup \goodTermination_{\stepIndex+1} \cup \badTermination_{\stepIndex+1}))$, we distinguish three cases. First, we consider $\raisingAgents_{\coordinatorOutputEstimate(\stepIndex)} = \emptyset$. In this case the event implies $\noise(4\stepIndex+4) > \maximumRemainingAgents/4$. Second, we consider $0 < \cardinality{\raisingAgents_{\coordinatorOutputEstimate(\stepIndex)}} \leq \maximumRemainingAgents/2$. In this case, the event implies that $1$ was appended to $\coordinatorOutputEstimate$ but no termination occurred and thus $\noise(4\stepIndex+4) > \maximumRemainingAgents/4$. Finally, we consider $\cardinality{\raisingAgents_{\coordinatorOutputEstimate(\stepIndex)}} > \maximumRemainingAgents$. In this case, the event implies $\noise(4\stepIndex+4) < -\maximumRemainingAgents/4$. Due to the symmetry of the noise, we can assume that in all three cases the event implies $\noise(4\stepIndex+4) > \maximumRemainingAgents/4$ for the sake of probability statements.
\end{itemize}

Finally, equality (\ref{eq:nocorrection-transitionprobability-independence}) is due to the independence of $\noise(4\stepIndex+2)$, $\noise(4\stepIndex+3)$ and $\noise(4\stepIndex+4)$.
\end{proof}

\begin{proof}[Proof of Theorem~\ref{theorem:ScalableMax}]
We observe that after $4(\descriptionLength+1)$ steps, the scheme either terminates or it is in a bad state and that by definition the scheme always starts in a good state (i.e. $\goodStateEvent_0$ almost surely occurs), so we can bound the probability of successful termination after at most $4(\descriptionLength+1)$ steps as
\begin{align*}
\Probability\left(
  \bigcap\limits_{\stepIndex=1}^{\descriptionLength+1}\left(
    \goodStateEvent_\stepIndex
    \cup
    \bigcup\limits_{\stepIndex'=1}^\stepIndex \goodTermination_{\stepIndex'}
  \right)
\right)
&=
\prod\limits_{\stepIndex=1}^{\descriptionLength+1} \Probability(\goodStateEvent_\stepIndex \cup \goodTermination_\stepIndex | \goodStateEvent_{\stepIndex-1})
\\
&\geq
\Probability(\noise \leq \maximumRemainingAgents/4)^{3(\descriptionLength+1)},
\end{align*}
where the inequality is due to Lemma~\ref{lemma:nocorrection-transitionprobability}, observing that conditioned under the previous state of the system, the next step of the scheme is independent of the previous steps.
\end{proof}
\end{shownto}
\begin{shownto}{conference}
We omit the proof of the theorem due to lack of space and instead present only a brief sketch. For a full formal proof, we refer the reader to the extended version of this paper~\cite{arxivVersion}.
With an appropriate case distinction, it is not hard to show that the scheme succeeds if the noise realizations throughout the scheme never exceed $\maximumRemainingAgents/4$. We can therefore derive an error bound if we know through how many iterations the scheme has to go until termination. Observe that the coordinator's output estimate cannot reach a length of more than $\descriptionLength$ if the noise is always less than $\maximumRemainingAgents/4$. Moreover, each digit added to the coordinator's output estimate as well as the termination decision in the end corresponds to one iteration and thus entails three analog multicast steps. For these reasons, we can argue that in case the noise samples are always less than $\maximumRemainingAgents/4$, the scheme terminates after at most $\descriptionLength+1$ iterations and we observe at most $3(\descriptionLength+1)$ noise samples, from which the theorem follows.
\end{shownto}

\section{ScalableMax-EC Scheme}
\label{section:scalable-max-ec}
In this section, we introduce the ScalableMax-EC scheme which expands upon the ideas of the previous section, introducing error correction. We achieve this with two main modifications to the ScalableMax scheme. First, the coordinator can now additionally make correction decisions, i.e., remove a digit from its current output estimate $\coordinatorOutputEstimate(\stepIndex)$. Second, in the cases in which the above scheme would terminate, the coordinator does not do so immediately, but rather raises a termination counter and only terminates when this counter reaches a termination threshold $\terminationThreshold$, which is a parameter of the scheme. For each condition $\text{cond} \in \{``\lexGreater", ``\lexGreaterCompatible", ``\text{append}"\}$ and each possible output estimate $\coordinatorOutputEstimate \in \finiteBinarySequences$, the coordinator keeps a termination counter $\coordinatorTerminationCount(\coordinatorOutputEstimate, \text{cond})$, which is initially $0$. Coordinator and agents communicate as they do in the above scheme, but the post-processing in the coordinator after step $4\stepIndex+3$ differs and is conducted according to Fig.~\ref{algorithm:coordinator-postprocessing-correction}. \showto{arxiv}{We visualize a part of this decision process in Fig.~\ref{fig:coordinator-postprocessing}.} \showto{conference}{The extended version~\cite{arxivVersion} of this work includes an additional visualization of part of this decision process.}

\begin{algorithm}
\uIf{$\coordinatorRx(4\stepIndex+1) > 3\maximumRemainingAgents/4$}{
  $\coordinatorOutputEstimate(\stepIndex+1) \gets \coordinatorOutputEstimate(\stepIndex)$ with last digit removed (if any)\;
}
\uElseIf{$\coordinatorRx(4\stepIndex+1) > \maximumRemainingAgents/4$}{
  $\coordinatorTerminationCount(\coordinatorOutputEstimate(\stepIndex), ``\lexGreater") \gets \coordinatorTerminationCount(\coordinatorOutputEstimate(\stepIndex), ``\lexGreater") + 1$\;
  \If{$\coordinatorTerminationCount(\coordinatorOutputEstimate(\stepIndex), ``\lexGreater") = \terminationThreshold$}{
    Terminate with $\outputCondition(\outputConditionFreeVariable) = \outputConditionFreeVariable \lexGreater \coordinatorOutputEstimate(\stepIndex)$\; 
  }
}
\uElseIf{$\coordinatorRx(4\stepIndex+2) < \maximumRemainingAgents/4$}{
  $\coordinatorOutputEstimate(\stepIndex+1) \gets \coordinatorOutputEstimate(\stepIndex)$ with last digit removed (if any)\;
}
\uElseIf{$\coordinatorRx(4\stepIndex+2) < 3\maximumRemainingAgents/4$}{
  $\coordinatorTerminationCount(\coordinatorOutputEstimate(\stepIndex), ``\lexGreaterCompatible") \gets \coordinatorTerminationCount(\coordinatorOutputEstimate(\stepIndex), ``\lexGreaterCompatible") + 1$\;
  \If{$\coordinatorTerminationCount(\coordinatorOutputEstimate(\stepIndex), ``\lexGreaterCompatible") = \terminationThreshold$}{
    Terminate with $\outputCondition(\outputConditionFreeVariable) = \outputConditionFreeVariable \lexGreaterCompatible \coordinatorOutputEstimate(\stepIndex)$\; 
  }
}
\uElseIf{$\coordinatorRx(4\stepIndex+3) < \maximumRemainingAgents/4$}{
  $\coordinatorOutputEstimate(\stepIndex + 1) \gets \coordinatorOutputEstimate(\stepIndex) \append 0$\;
}
\uElseIf{$\coordinatorRx(4\stepIndex+3) < 3\maximumRemainingAgents/4$}{
  $\coordinatorTerminationCount(\coordinatorOutputEstimate(\stepIndex), ``\text{append}") \gets \coordinatorTerminationCount(\coordinatorOutputEstimate(\stepIndex), ``\text{append}") + 1$\;
  \If{$\coordinatorTerminationCount(\coordinatorOutputEstimate(\stepIndex), ``\text{append}") = \terminationThreshold$}{
    Terminate with $\outputCondition(\outputConditionFreeVariable) = \outputConditionFreeVariable \lexGreaterCompatible \coordinatorOutputEstimate(\stepIndex) \append 1$\; 
  }
}
\Else{
  $\coordinatorOutputEstimate(\stepIndex + 1) \gets \coordinatorOutputEstimate(\stepIndex) \append 1$ \;
}
\caption{Post-processing of received signals in ScalableMax-EC.}
\label{algorithm:coordinator-postprocessing-correction}
\end{algorithm}

\section{Simulation Results}
\label{section:simulation}
We model the problem as described in Section \ref{section:model}
and run the ScalableMax and ScalableMax-EC algorithms as described in sections \ref{section:scalable-max} and \ref{section:scalable-max-ec}. 
Uniform random bits are used as the agents' input sequences $(\agentInputSequence{\agentIndex})_{\agentIndex=1}^\numAgents$. We assume unit transmission power and white Gaussian noise. Each plotted data point is an average over $10^5$ identical and independent simulation runs.

In Fig.~\ref{fig:1}, we compare the performance of the proposed schemes in terms of \emph{error rate}, i.e., the rate of unsuccessful termination of the scheme, 
and in Fig.~\ref{fig:2}, the average number of iterations required in successfully terminated runs of the schemes.
As the noise power increases, so does the chance of unfavorable decisions by the coordinator. ScalableMax-EC has a mechanism to correct such bad decisions, and thus exhibits lower error rates, but at the cost of needing more iterations than ScalableMax.

\begin{figure}
\begin{center}
\subfloat[Error rate\label{fig:1}]{
\begin{tikzpicture}
\begin{semilogyaxis}[
  xlabel={Noise power in dB},
  ylabel={Error rate},
  ymin=.0005,
  ymax=1,
  xmin=-5.5,
  xmax=15.5,
  legend style={at={(0.5,1.1)}, anchor=south},
  legend columns=4,
]
\addplot [blue, mark=*, discard if not={correction}{False}] table[x=noise_power, y expr=1-\thisrow{success_rate}, col sep=comma]{results_mutiple_correction_m8_1e5iter_7244a4555.csv};
\addplot [red, mark=square*, discard if not={termination_parameter}{2}] table[x=noise_power, y expr=1-\thisrow{success_rate}, col sep=comma]{results_mutiple_correction_m8_1e5iter_7244a4555.csv};
\addplot [green, mark=triangle*, discard if not={termination_parameter}{3}] table[x=noise_power, y expr=1-\thisrow{success_rate}, col sep=comma]{results_mutiple_correction_m8_1e5iter_7244a4555.csv};
\addplot [brown, mark=diamond*, discard if not={termination_parameter}{4}] table[x=noise_power, y expr=1-\thisrow{success_rate}, col sep=comma]{results_mutiple_correction_m8_1e5iter_7244a4555.csv};
\addplot [cyan, mark=x, discard if not={termination_parameter}{5}] table[x=noise_power, y expr=1-\thisrow{success_rate}, col sep=comma]{results_mutiple_correction_m8_1e5iter_7244a4555.csv};
\addplot [teal, mark=+, discard if not={termination_parameter}{10}] table[x=noise_power, y expr=1-\thisrow{success_rate}, col sep=comma]{results_mutiple_correction_m8_1e5iter_7244a4555.csv};
\addplot [violet, mark=star, discard if not={termination_parameter}{20}] table[x=noise_power, y expr=1-\thisrow{success_rate}, col sep=comma]{results_mutiple_correction_m8_1e5iter_7244a4555.csv};
\legend{no EC, $\terminationThreshold=2$, $\terminationThreshold=3$, $\terminationThreshold=4$, $\terminationThreshold=5$, $\terminationThreshold=10$, $\terminationThreshold=20$}
\end{semilogyaxis}
\end{tikzpicture}
} 
\\

\subfloat[Number of iterations until successful termination.\label{fig:2}]{
\begin{tikzpicture}
\begin{semilogyaxis}[
  xlabel={Noise power in dB},
  ylabel={Iterations},
  ymin=7,
  ymax=250,
  xmin=-5.5,
  xmax=15.5,
]
\addplot [blue, solid, mark=*, discard if not={correction}{False}] table[x=noise_power, y=average_iterations_in_successful_runs, col sep=comma]{results_mutiple_correction_m8_1e5iter_7244a4555.csv};
\addplot [red, solid, mark=square*, discard if not={termination_parameter}{2}] table[x=noise_power, y=average_iterations_in_successful_runs, col sep=comma]{results_mutiple_correction_m8_1e5iter_7244a4555.csv};
\addplot [green, solid, mark=triangle*, discard if not={termination_parameter}{3}] table[x=noise_power, y=average_iterations_in_successful_runs, col sep=comma]{results_mutiple_correction_m8_1e5iter_7244a4555.csv};
\addplot [brown, solid, mark=diamond*, discard if not={termination_parameter}{4}] table[x=noise_power, y=average_iterations_in_successful_runs, col sep=comma]{results_mutiple_correction_m8_1e5iter_7244a4555.csv};
\addplot [cyan, solid, mark=x, discard if not={termination_parameter}{5}] table[x=noise_power, y=average_iterations_in_successful_runs, col sep=comma]{results_mutiple_correction_m8_1e5iter_7244a4555.csv};
\addplot [teal, solid, mark=+, discard if not={termination_parameter}{10}] table[x=noise_power, y=average_iterations_in_successful_runs, col sep=comma]{results_mutiple_correction_m8_1e5iter_7244a4555.csv};
\addplot [violet, solid, mark=star, discard if not={termination_parameter}{20}] table[x=noise_power, y=average_iterations_in_successful_runs, col sep=comma]{results_mutiple_correction_m8_1e5iter_7244a4555.csv};
\end{semilogyaxis}
\end{tikzpicture}
} 
\caption{Performance of ScalableMax (no EC) and ScalableMax-EC with various termination thresholds $\terminationThreshold$, $\numAgents=1000$, and $\maximumRemainingAgents=8$.}
\end{center}
\end{figure}
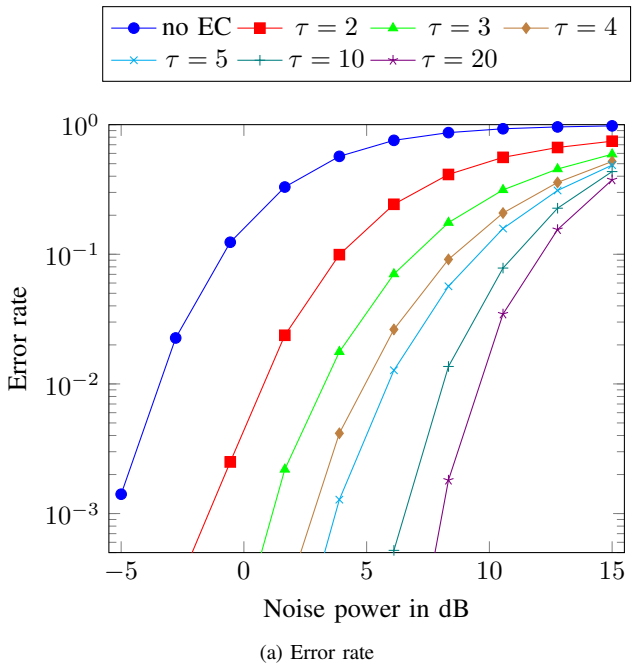
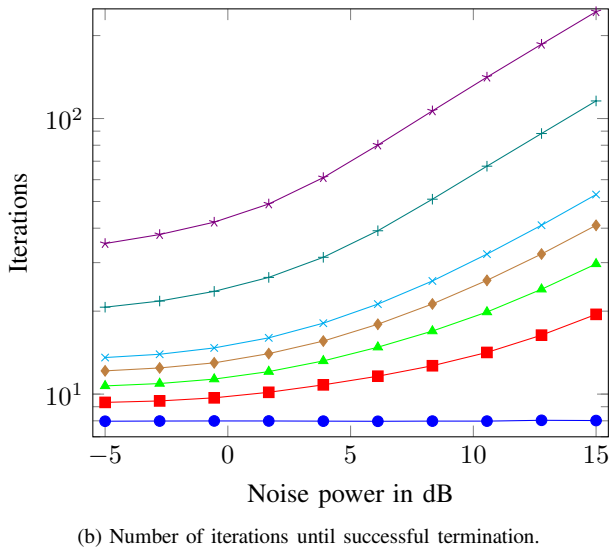

In Fig.~\ref{fig:num_iter}, we compare the scalability of the ScalableMax-EC scheme with the state-of-the-art \emph{Random-Broadcast} (RB) and \emph{Random-Pairwise} (RP) schemes described in \cite{Iutzeler2012}. To this end, we extend our scheme with a RB step to determine the maximum among the $\maximumRemainingAgents=8$ agents that remain after a ScalableMax-EC run. We choose $\terminationThreshold$ such that this combination achieves an overall error rate of at most $0.005$. For comparison, we plot the number of iterations necessary in RB and RP to achieve an error rate of $0.005$, given that all digital transmissions arrive error free.
The ScalableMax-EC scheme scales logarithmically with the total number of agents, 
while RB and RP scale at least linearly (see also~\cite{Iutzeler2012}).

\begin{figure}
\resizebox {\columnwidth} {!} {
\begin{tikzpicture}
\begin{axis}[
  xlabel={$\numAgents$},
  axis y line*=left,
  ylabel={Iterations for ScalableMax-EC},
  ymin=0,
  ymax=200,
  xmin=0,
  xmax=5100,
  legend pos = north west,
]
\addplot [blue,solid,mark=*, discard if not={termination_parameter}{2} ] table[x=agents, y expr=\thisrow{average_iterations_in_successful_runs}+51, col sep=comma] {results_varying_agents_m8_1e5iter_7244a4555.csv};
\addplot [blue,solid,mark=triangle*, discard if not={termination_parameter}{6}] table[x=agents, y expr=\thisrow{average_iterations_in_successful_runs}+51, col sep=comma] {results_varying_agents_m8_1e5iter_7244a4555.csv};
\addplot [blue,solid, mark=square*, discard if not={termination_parameter}{10}] table[x=agents, y expr=\thisrow{average_iterations_in_successful_runs}+51, col sep=comma] {results_varying_agents_m8_1e5iter_7244a4555.csv};
\legend{{-1.0 dB, $\terminationThreshold=2$}, {5.0 dB, $\terminationThreshold=6$}, {7.0 dB, $\terminationThreshold=10$}}
\end{axis}
%
\begin{axis}[
  xlabel={$\numAgents$},
  axis y line*=right,
  ylabel={Iterations for RB and RP},
  ymin=0,
  ymax=40000,
  xmin=0,
  xmax=5100,
  legend pos = south east,
]
\addplot [red, dashed, mark=square*] table[x=agents, y=RB, col sep=comma]{results_varying_agents_RBRP.csv};
\addplot [red, dashed, mark=triangle*] table[x=agents, y=RP, col sep=comma]{results_varying_agents_RBRP.csv};
\legend{{RB}, {RP}}
\end{axis}
\end{tikzpicture}
}
\caption{Number of iterations as a function of $\numAgents$ for proposed ScalableMax-EC scheme on the left y-axis, and state-of-the-art Random Broadcast (RB) and Random Pairwise(RP) schemes on the right y-axis, for a fixed error rate $<0.005$. Note the different scales of left and right y-axes.}
\label{fig:num_iter}
\end{figure}
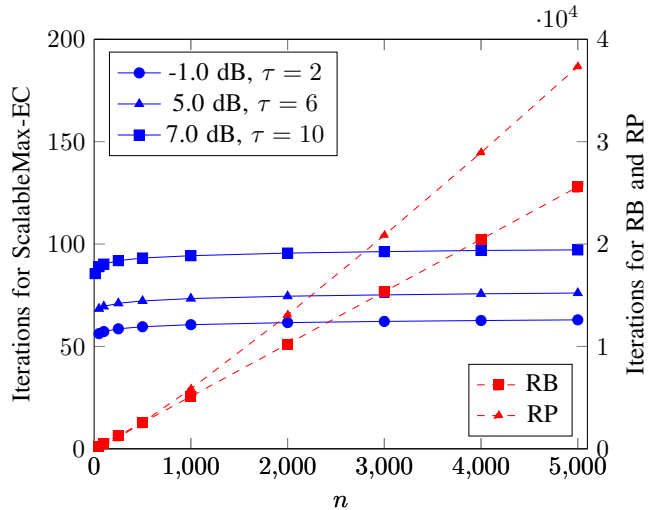

We conclude with two observations made during the simulations which are not shown in the plots. First, if the agents' input sequences are quantized versions of Gaussian random numbers, the number of iterations needed increases slightly depending on the variance of the random numbers and the granularity of the quantization. Second, the performance in terms of iterations can be improved significantly by choosing a suitable $\coordinatorOutputEstimate(0)$ other than $\emptySequence$. One example that performs well is the coordinator output estimate at which the scheme was terminated successfully in an identically distributed but independent earlier simulation run. Finding other ways to choose suitable $\coordinatorOutputEstimate(0)$ in practical scenarios remains an open point for future research.

\section{Extension to non-star-shaped networks}
\label{section:noncomplete}
In this section, we propose a method to extend our schemes to general undirected connected network graphs.
We assume some prior coordination in that a set of designated coordinators $\{\coordinator_1, \dots, \coordinator_\numCoordinators\}$ is known
such that the graph would still be connected if all edges that are not adjacent to one of the coordinators were removed. Furthermore, we assume that these coordinators have a way of scheduling their communication in a sequential manner.
Note that because of the connectivity requirement, 
some agents necessarily have links with two or more coordinators. 
We denote the subgraph of $\networkGraph$ induced by $\coordinator_\coordinatorIndex$ and its neighbors 
with $\networkGraph_\coordinatorIndex$ and achieve max-consensus with these steps:
\begin{enumerate}
 \item \label{extended-network-scheme-step} For each $\coordinatorIndex \in \{1,\dots,\numCoordinators\}$, find a max-consensus in $\networkGraph_\coordinatorIndex$ and update the inputs of all agents (to be used in all future max-consensus steps) to be the consensus value.
 \item Repeat step~\ref{extended-network-scheme-step} a total of $\numCoordinators$ times.
\end{enumerate}
After the initial execution of step~\ref{extended-network-scheme-step}, at least one subgraph of agents will have the true maximum as the input for future consensus schemes. The connectivity requirement ensures that after each further execution of step~\ref{extended-network-scheme-step}, this property is propagated to at least one additional subgraph, so after $\numCoordinators$ repetitions, the whole network has achieved max-consensus.

Note that the ScalableMax or ScalableMax-EC scheme is executed a total of $\numCoordinators^2$ times, so our scheme can be advantageous compared to the random-pairwise or random-broadcast scheme only as long as the network can be partitioned into subgraphs of very large size with a very small number of coordinators, which can for example be the case in ultra-dense networks of not overly large diameter.

\section{Conclusion}
\label{section:conclusion}
We have introduced a novel max-consensus protocol designed to handle noise while exploiting interference in order to be highly scalable in star-shaped wireless networks. Under minimal assumptions on the initial values, we have proved analytically that the consensus is reached with complexity that is logarithmic in the number of agents.
For the low and medium SNR regime, we have added an error correction mechanism which achieves lower overall error at the expense of increased complexity. Our simulations have demonstrated that logarithmic complexity is retained and the proposed schemes compare favorably with state-of-the-art baselines if the network is dense. Finally, we have extended the proposed schemes to more general, non-star-shaped networks. Open questions for future research include finding a mechanism for distributed clustering, how to initialize the scheme with an optimal starting sequence and finding suitable pre- and post-processing schemes that deal with fading.
\bibliographystyle{IEEEtran}
\bibliography{IEEEabrv,references}

\end{document}